\documentclass[conference]{IEEEtran}
\IEEEoverridecommandlockouts

\usepackage{cite}
\usepackage{amsmath,amssymb,amsfonts}
\usepackage{algorithmicx}
\usepackage[ruled]{algorithm}
\usepackage{algpseudocode}
\usepackage{graphicx}
\usepackage{textcomp}
\usepackage{xcolor}
\usepackage{float}
\usepackage{tikz}
\usepackage{booktabs}
\usepackage{hyperref}
\usepackage{amsthm}
\usepackage{mdframed}

\newtheorem{definition}{Definition}
\newtheorem{theorem}{Theorem}

\newtheorem{corollary}{Corollary}

\newtheorem{proposition}{Proposition}
\newtheorem{assumption}{Assumption}

\newcommand{\skey}{k}

\newcommand{\noise}{r}              

\def\BibTeX{{\rm B\kern-.05em{\sc i\kern-.025em b}\kern-.08em
    T\kern-.1667em\lower.7ex\hbox{E}\kern-.125emX}}

\IEEEoverridecommandlockouts
\IEEEpubid{\makebox[\columnwidth]{Preprint, 2026 \hfill} \hspace{\columnsep}\makebox[\columnwidth]{ }}

\begin{document}

\title{Beyond Controlled Noise: Achieving Symmetric FHE through Dynamic Position Shifting}

\author{
\IEEEauthorblockN{\small{Mostefa~Kara}}
\IEEEauthorblockA{\textit{\small{LIAP Laboratory, University of El Oued,}} \\
\textit{\small{PO Box 789, El Oued,}}\\
\textit{\small{39000, Algeria}}\\
\small{karamostefa@univ-eloued.dz}}

}

\maketitle
\begin{abstract}
Traditional Fully Homomorphic Encryption (FHE) schemes often suffer from prohibitive computational overhead and complex noise management. In this paper, we propose a novel symmetric FHE through a mechanism of plaintext fragmentation and dynamic interposition. Our approach is built upon a modular encryption foundation, $c = mk + \noise p$, which is naturally additive but typically limited by exponential noise growth during multiplication. To resolve this, we introduce an interposition framework where the plaintext is partitioned into multiple fragments across distinct logical positions. We introduce a dual-regulator system to govern the multiplication process; exponent regulators ($t_i$) redirect the product of fragments to a new target position, preventing the accumulation of secret key exponents, while coefficient regulators ($d_i$) normalize the resulting scalars. Security is established through a binding mechanism where exponents and coefficients are mutually dependent, shielding the secret key $k$ from algebraic manipulation and substitution attacks.
\end{abstract}

\begin{IEEEkeywords}
FHE, Trivial Encryption, Noise Management, Regulator Keys, Lightweight Cryptography.
\end{IEEEkeywords}

\section{Introduction}

Fully Homomorphic Encryption (FHE) has emerged as one of the most powerful paradigms in modern cryptography, enabling computations to be performed directly on encrypted data without revealing the underlying plaintext. Since the seminal work of Gentry \cite{gentry2009fully}, numerous constructions have been proposed to improve the practicality of FHE, ranging from lattice-based designs \cite{brakerski2014leveled, fan2012somewhat} to schemes optimized for machine learning and privacy-preserving cloud computing \cite{chillotti2020tfhe, bergerat2024new}. Despite their theoretical elegance, existing FHE systems remain computationally heavy and resource-intensive, which limits their deployment in lightweight or real-time scenarios such as IoT and WSN \cite{habib2021secure}.  

A promising alternative direction is to explore symmetric FHE constructions, which can offer simpler operations and reduced overhead compared to public-key approaches \cite{kara2020homomorphic}. However, naive symmetric designs suffer from significant limitations \cite{aissaoua2024integrating}. For example, the trivial encryption method defined by $c = m k$, where $m$ is the message and $k$ is the secret key, is an additive homomorphism but rapidly accumulates noise when ciphertexts are multiplied. This uncontrolled growth renders the scheme impractical for applications requiring multiplicative depth.  

To address these challenges, we propose a novel symmetric FHE scheme that combines plaintext fragmentation with an interposition mechanism. The core idea is to divide the message into smaller fragments and use regulator public keys that control the growth of key exponents during homomorphic multiplication. By substituting the positions of plaintext fragments under carefully chosen regulators, the scheme preserves the correctness of multiplicative operations while avoiding noise amplification. This design provides a lightweight and efficient symmetric alternative to conventional FHE models.

The development of FHE has followed several key trajectories. Early works such as Gentry's lattice-based blueprint \cite{gentry2009fully} introduced the concept of bootstrapping to control noise, inspiring follow-up schemes like BGV \cite{brakerski2014leveled} and FV \cite{fan2012somewhat}. These schemes improved efficiency but still relied on complex lattice operations. More recently, TFHE \cite{chillotti2020tfhe} and CKKS \cite{cheon2017homomorphic} tailored FHE to specific domains, supporting fast Boolean circuits or approximate arithmetic, respectively.

In parallel, research on symmetric or lightweight FHE has remained relatively limited. Some studies examined secret-key homomorphic constructions for specialized tasks \cite{armknecht2015guide, catalano2024anamorphic}, but most suffer from scalability or noise-related issues. To the best of our knowledge, no prior work has investigated the use of interposition and plaintext fragmentation to regulate ciphertext growth in symmetric FHE. Our proposal therefore fills this gap by introducing a regulator-assisted mechanism that ensures feasible multiplicative homomorphism in a symmetric setting.

As a result, this paper introduces a novel symmetric FHE scheme that leverages plaintext fragmentation and interposition to enable efficient homomorphic properties (addition, multiplication); formalizes the role of regulator public keys in controlling the exponents of the secret key, thereby addressing the noise growth problem inherent in trivial encryption; analyzes the correctness and security of the scheme, showing how it maintains confidentiality while supporting valid homomorphic operations; and provides an illustrative implementation and example, demonstrating the potential of the scheme as a lightweight alternative to traditional FHE constructions.

\section{Proposal}

In the proposed symmetric FHE scheme, a regulator is a publicly known auxiliary key element designed to govern the evolution of the secret key's exponent during homomorphic multiplication. Formally, a regulator is a value derived from the secret key $k$ and specific positional scalars such that its application mathematically redirects or repositions fragments of the plaintext within the ciphertext domain. The primary purpose of regulators is to ensure that the product of two ciphertexts does not lead to the uncontrolled exponential growth of $k$ or the accumulation of scalar coefficients. By managing these two vectors, the regulators preserve both the correctness of decryption and the structural stability of the ciphertext across recursive operations.

Since providing a direct multiplicative inverse of $k^{x+y}$ would compromise the secret key, the interposition mechanism avoids this by randomly splitting the plaintext into parts and employing a cyclic shift. In a typical multiplication $m_i k^x \times m_i' k^y$, rather than attempting to reduce the result back to the original position $i$ (which would require a risky inverse regulator), we redirect the result to a new target position $j$ using a regulator $k^z$, where $k^{x+y} k^z \neq 1$. This ensures that while the exponent is transformed, the secret $k$ remains algebraically protected. The proposal is naturally probabilistic; even for the same message $m$, the encryption results $Enc_1(m)$ and $Enc_2(m)$ will differ due to the random fragmentation of the plaintext during the initial encoding.

The interposition process can operate in two distinct modes. In the random mode, the product of positions $P_i P_j$ via a regulator is mapped to an arbitrary position $P_l$. In the regular mode, which is adopted in this work with $P=3$ positions, the mapping follows a deterministic function $l = f(i,j)$. Specifically, we define a cyclic flow: if $i=j$, the result moves to $l = i+1 \pmod P$; if $i \neq j$, the result moves to $l = j+1 \pmod P$ (it targets the remaining third position in our case, $P = 3$). This creates a closed-loop system where fragments are perpetually reshuffled but never lost or exponentially expanded.

To achieve total stability, we utilize a dual-regulator architecture consisting of exponent regulators ($t_i$) and coefficient regulators ($d_i$). While the exponent regulators manage the power of $k$, the coefficient regulators normalize the resulting scalar products. Without this normalization, the internal coefficients $a_i$ would square and grow with each multiplication. By applying $d_i$, we perform a coefficient reset that transforms the scalar of the product back into the standard form required for the target position. This dual-binding ensures that an attacker cannot isolate $k$ by manipulating one regulator without being thwarted by the unknown parameters of the other.

The specific interposition mechanism for the three-fragment model ($m_1, m_2, m_3$) is suggested as follows:
\begin{itemize}
    \item $P_1 \times P_1 \times Reg_{t1, d2} \longrightarrow P_2$
    \item $P_2 \times P_2 \times Reg_{t2, d3} \longrightarrow P_3$
    \item $P_3 \times P_3 \times Reg_{t3, d1} \longrightarrow P_1$ (cyclic return)
    \item $P_1 \times P_2 \times Reg_{t4, d3} \longrightarrow P_3$
    \item $P_2 \times P_3 \times Reg_{t6, d1} \longrightarrow P_1$ (cyclic return)
    \item $P_1 \times P_3 \times Reg_{t5, d2} \longrightarrow P_2$
\end{itemize}

\begin{figure}[H]
\includegraphics[width=3in]{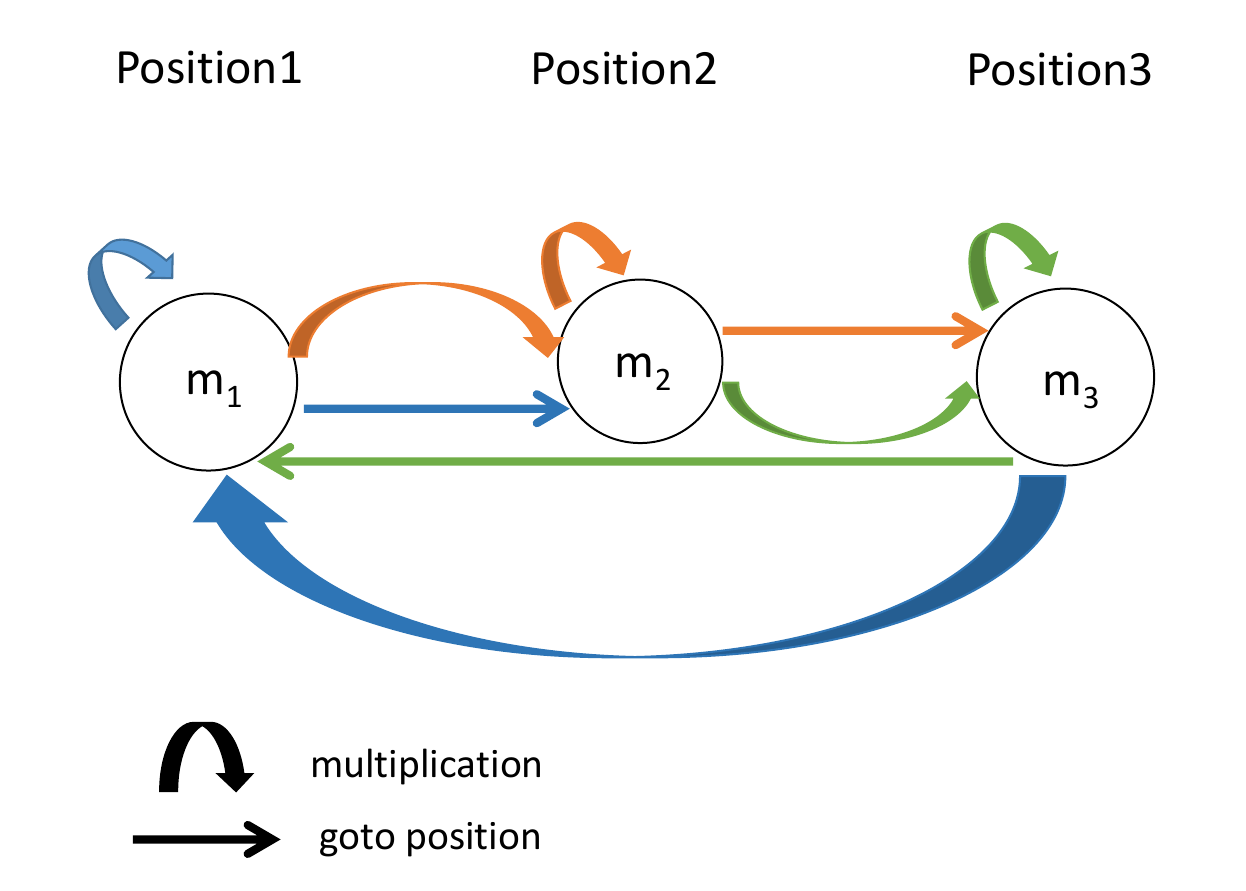}
\caption{A global overview of the interposition mechanism.}
\label{fig:interpos}
\end{figure}

Therefore, six regulators are needed for three fragments (Algorithm \ref{alg:kg}).

\begin{algorithm}
\caption{$\mathsf{KeyGen}(p,\, q,\, R_1,\, R_2)$}
\label{alg:kg}
\begin{algorithmic}[1]
\Require Primes $p$, $q$; sampling bounds $R_1 < R_2$
\Ensure  Secret key $(k_1,k_2,k_3)$;
         public key $(n,\, t_1,\ldots,t_6,\, d_1,d_2,d_3)$

\State $n \gets p \cdot q$

\Statex \Comment{Coefficient scalars}
\State Sample $a_1, a_2, a_3,\, b_1, b_2, b_3 \xleftarrow{\$} [R_1, R_2)$

\State $b_4 \gets a_2 \cdot b_2 \cdot (a_1)^{-1} \pmod{p}$

\State $b_5 \gets a_1 \cdot b_1 \cdot (a_3)^{-1} \pmod{p}$

\State $b_6 \gets a_3 \cdot b_3 \cdot (a_2)^{-1} \pmod{p}$

\Statex \Comment{ Secret key and exponents }
\State Sample $k,\, e_1, e_2, e_3 \xleftarrow{\$} [R_1, R_2)$

\Statex \Comment{ Position keys $k_i = a_i \cdot k^{e_i} + \noise \cdot p \pmod{n}$}
\For{$i \gets 1$ \textbf{to} $3$}
    \State Sample $\noise \xleftarrow{\$} [R_1, R_2)$
    \State $k_i \gets \bigl(a_i \cdot k^{e_i} \bmod p\bigr) + \noise \cdot p \pmod{n}$
\EndFor

\Statex \Comment{Exponent regulators $t_i = b_i \cdot k^{x} + \noise \cdot p \pmod{n}$}
\State Sample $\noise$;
\quad $t_1 \gets b_1 \cdot k^{(e_2 - 2e_1)}  + \noise \cdot p \pmod{n}$

\State Sample $\noise$;
\quad $t_2 \gets b_2 \cdot k^{(e_3 - 2e_2)} + \noise \cdot p \pmod{n}$

\State Sample $\noise$;
\quad $t_3 \gets b_3 \cdot k^{(e_1 - 2e_3)} + \noise \cdot p \pmod{n}$

\State Sample $\noise$;
\quad $t_4 \gets b_4 \cdot k^{(e_3 - e_1 - e_2)}  + \noise \cdot p \pmod{n}$

\State Sample $\noise$;
\quad $t_5 \gets b_5 \cdot k^{(e_2 - e_1 - e_3)} +  \noise \cdot p \pmod{n}$

\State Sample $\noise$;
\quad $t_6 \gets b_6 \cdot k^{(e_1 - e_2 - e_3)} + \noise \cdot p \pmod{n}$

\Statex \Comment{ Coefficient regulators }
\State $d_1 \gets a_1 \cdot (a_3^2 \cdot b_3)^{-1} \pmod{p}$

\State $d_2 \gets a_2 \cdot (a_1^2 \cdot b_1)^{-1} \pmod{p}$

\State $d_3 \gets a_3 \cdot (a_2^2 \cdot b_2)^{-1} \pmod{p}$

\State \Return $(k_1, k_2, k_3,\; t_1,\ldots,t_6,\; d_1, d_2, d_3,\; n)$
\end{algorithmic}
\end{algorithm}

\begin{algorithm}
\caption{$\mathsf{Enc}(m,\; k_1, k_2, k_3,\; p,\, n,\; R_1, R_2)$}
\begin{algorithmic}[1]
\Require Plaintext $m$; position keys $k_1,k_2,k_3$; moduli $p$, $n$;
         bounds $R_1, R_2$
\Ensure  Ciphertext $(c_1, c_2, c_3)$

\State Sample $m_1, m_2 \xleftarrow{\$} [R_1, R_2)$
\State $m_3 \gets (m - m_1 - m_2) \bmod n$
\Comment{Ensures $m_1 + m_2 + m_3 = m$; encoding is probabilistic}

\For{$i \gets 1$ \textbf{to} $3$}
    \State Sample $\noise \xleftarrow{\$} [R_1, R_2)$
    \State $c_i \gets m_i \cdot k_i + \noise \cdot p \pmod{n}$
    \Comment{$c_i \equiv m_i \cdot k_i \pmod{p}$; noise vanishes mod $p$}
\EndFor

\State \Return $(c_1, c_2, c_3)$
\end{algorithmic}
\end{algorithm}

\begin{algorithm}
\caption{$\mathsf{Dec}(c_1, c_2, c_3,\; k_1^{-1}, k_2^{-1}, k_3^{-1},\; p)$}
\begin{algorithmic}[1]
\Require Ciphertext $(c_1, c_2, c_3)$;
         modular inverses $k_i^{-1} = (k_i \bmod p)^{-1} \pmod{p}$;
         prime $p$
\Ensure  Plaintext $m$

\For{$i \gets 1$ \textbf{to} $3$}
    \State $m_i \gets c_i \cdot k_i^{-1} \pmod{p}$
    \Comment{Noise term $\noise \cdot p \equiv 0 \pmod{p}$, so it vanishes}
\EndFor
\State $m \gets (m_1 + m_2 + m_3) \pmod{p}$

\State \Return $m$
\end{algorithmic}
\end{algorithm}

\begin{algorithm}
\caption{$\mathsf{HAdd}(c_1,c_2,c_3,\; c_1',c_2',c_3',\; n)$}
\begin{algorithmic}[1]
\Require Two ciphertexts $(c_1,c_2,c_3)$, $(c_1',c_2',c_3')$; modulus $n$
\Ensure  Ciphertext $(c_1'',c_2'',c_3'')$ encrypting $m + m'$

\For{$i \gets 1$ \textbf{to} $3$}
    \State $c_i'' \gets (c_i + c_i') \bmod n$
    \Comment{$(m_i + m_i') \cdot k_i \pmod{p}$; position and exponent unchanged}
\EndFor

\State \Return $(c_1'', c_2'', c_3'')$
\end{algorithmic}
\end{algorithm}

\begin{algorithm}
\caption{$\mathsf{HMul}(c_1,c_2,c_3,\; c_1',c_2',c_3',\;
                        t_1,\ldots,t_6,\; d_1,d_2,d_3,\; n)$}
\begin{algorithmic}[1]
\Require Two ciphertexts $(c_1,c_2,c_3)$, $(c_1',c_2',c_3')$;
         exponent regulators $t_1,\ldots,t_6$;
         coefficient regulators $d_1,d_2,d_3$; modulus $n$
\Ensure  Ciphertext $(c_1'',c_2'',c_3'')$ encrypting $m \cdot m'$

\Statex \Comment{%
Each output position accumulates the products whose interposition map targets it. $d_i$ normalises the resulting coefficient so decryption yields the correct fragment.}

\State $c_1'' \gets \bigl(c_3 c_3' t_3 + c_2 c_3' t_6 + c_3 c_2' t_6\bigr)\cdot d_1 \pmod{n}$
\Comment{%
Targets $P_1$: same-pos $P_3\!\times\!P_3$ via $t_3$;
cross-pos $P_2\!\times\!P_3$ (both orders) via $t_6$}

\State $c_2'' \gets \bigl(c_1 c_1' t_1 + c_1 c_3' t_5 + c_3 c_1' t_5\bigr)\cdot d_2 \pmod{n}$
\Comment{%
Targets $P_2$: same-pos $P_1\!\times\!P_1$ via $t_1$;
cross-pos $P_1\!\times\!P_3$ (both orders) via $t_5$}

\State $c_3'' \gets \bigl(c_2 c_2' t_2 + c_1 c_2' t_4 + c_2 c_1' t_4\bigr)\cdot d_3 \pmod{n}$
\Comment{%
Targets $P_3$: same-pos $P_2\!\times\!P_2$ via $t_2$;
cross-pos $P_1\!\times\!P_2$ (both orders) via $t_4$}

\State \Return $(c_1'', c_2'', c_3'')$
\end{algorithmic}
\end{algorithm}

\section{Security Analysis}


\newcommand{\Adv}{\mathcal{A}}
\newcommand{\Enc}{\mathsf{Enc}}
\newcommand{\Dec}{\mathsf{Dec}}
\newcommand{\KeyGen}{\mathsf{KeyGen}}
\newcommand{\HMul}{\mathsf{HMul}}
\newcommand{\HAdd}{\mathsf{HAdd}}
\newcommand{\negl}{\mathsf{negl}}
\newcommand{\poly}{\mathsf{poly}}
\newcommand{\Zn}{\mathbb{Z}_n}
\newcommand{\Zp}{\mathbb{Z}_p}
\newcommand{\AdvIND}[1]{\mathbf{Adv}^{\mathrm{IND\text{-}CPA}}_{#1}}
\newcommand{\abs}[1]{\left|#1\right|}
\newcommand{\MMA}{\mathsf{MMA}}
\newcommand{\HMP}{\mathsf{HMP}}

\subsection{Preliminaries and Notation}
 
Let $\lambda$ denote the security parameter. We write $\negl(\lambda)$ for any negligible function and $\poly(\lambda)$ for an unspecified polynomial. All algorithms are PPT unless stated otherwise.
 
\paragraph{Scheme parameters}
\begin{itemize}
  \item $p, q$: distinct primes of bit-length $\approx \lambda$;\quad $n = pq$.
  \item $[R_1, R_2)$: deployment-defined sampling range with $\Delta := R_2 - R_1 \geq 2^\lambda$ (super-polynomial in $\lambda$).
  \item Plaintext space: $m \in \Zp$; fragments satisfy $m_1 + m_2 + m_3 \equiv m \pmod{p}$ with $m_1, m_2 \xleftarrow{\$} \Zp^*$, $m_3 := m - m_1 - m_2 \bmod p$.
  \item Secret key $\skey$, exponents $e_i$, scalars $a_i, b_i$: all sampled from $[R_1, R_2)$.
  \item Position keys: $k_i \equiv a_i \cdot \skey^{e_i} \pmod{p}$, embedded in $\Zn$ with masking noise $r_i p$.
  \item Ciphertext of fragment $m_i$:
        $c_i = m_i k_i + r_i p \pmod{n}$, so
        $c_i \equiv m_i k_i \pmod{p}$.
\end{itemize}
 
\subsection{Hardness Assumptions}
 
We identify the precise assumptions on which the IND-CPA proof rests. 
 
\begin{assumption}[Multiplicative Masking Assumption ($\MMA_p$)]
\label{asm:mma}
Let $p$ be a $\lambda$-bit prime and $k_i \in \Zp^*$ be a fixed (secret) group element. For a uniformly random $u \xleftarrow{\$} \Zp^*$, the distribution of $u \cdot k_i \bmod p$ is computationally indistinguishable from the uniform distribution on $\Zp^*$ for any PPT adversary that does not know $k_i$. Formally, for all PPT $\Adv$:
\[
  \Bigl|
    \Pr\bigl[\Adv(p,\, u \cdot k_i \bmod p) = 1\bigr]
    -
    \Pr\bigl[\Adv(p,\, v) = 1 \mid v \xleftarrow{\$} \Zp^*\bigr]
  \Bigr| = \negl(\lambda).
\]
\end{assumption}
 
$\MMA_p$ holds unconditionally when $k_i$ is secret and $u$ is uniform: multiplication by a fixed nonzero element is a bijection on $\Zp^*$, so $u \cdot k_i$ is identically distributed to $u$ when $k_i$ is unknown and $u$ is uniform. The assumption, therefore, reduces to the secrecy of $k_i$, which in turn rests on DLA (recovering $k_i$ from public information requires solving a discrete logarithm). We state it as a named assumption for modularity, but note that once $k_i$ is secret, the masking is information-theoretically perfect on $\Zp^*$.
 
\begin{assumption}[Hidden Modulus Indistinguishability ($\HMP_n$)]
\label{asm:hmp}
Let $n = pq$ with $p, q$ unknown $\lambda$-bit primes. For a uniformly random $x \xleftarrow{\$} \Zp^*$ and uniformly random $r \xleftarrow{\$} [R_1, R_2)$, the distribution
\[
  c = x + r \cdot p \pmod{n}
\]
is computationally indistinguishable from the uniform distribution on $\Zn$ for any PPT adversary that does not know $p$. Formally, for all PPT $\Adv$:
\[
  \Bigl|
    \Pr\bigl[\Adv(n,\, c) = 1\bigr]
    -
    \Pr\bigl[\Adv(n,\, w) = 1 \mid w \xleftarrow{\$} \Zn\bigr]
  \Bigr| = \negl(\lambda).
\]
\end{assumption}
 
 
\subsection{IND-CPA Security}
 
\subsubsection{Security Game}
 
\begin{definition}[IND-CPA Game]
The experiment $\mathbf{Exp}^{\mathrm{IND\text{-}CPA}}_{\Pi,\Adv}(\lambda)$:
\begin{enumerate}
  \item Setup. Run $\KeyGen(\lambda)$ to obtain $\mathit{sk} = (k_1,k_2,k_3)$ and $\mathit{pk} = (n,\, t_1,\ldots,t_6,\, d_1,d_2,d_3)$.
  \item Learning. $\Adv(\mathit{pk})$ queries $\Enc(\mathit{sk},\cdot)$ polynomially many times.
  \item Challenge. $\Adv$ outputs $(m_0, m_1)$. Challenger samples $b \xleftarrow{\$} \{0,1\}$, returns $\mathbf{c}^* \leftarrow \Enc(\mathit{sk}, m_b)$.
  \item Guess. $\Adv$ outputs $b'$.
\end{enumerate}
$\AdvIND{\Pi}(\Adv) := \bigl|\Pr[b'=b] - \tfrac{1}{2}\bigr|.$
\end{definition}
 
\subsubsection{Main Theorem}
 
\begin{theorem}[IND-CPA Security under $\MMA_p$ and $\HMP_n$]
\label{thm:indcpa}
For every PPT adversary $\Adv$,
\[
  \AdvIND{\Pi}(\Adv)
  \;\leq\;
  3\cdot\mathbf{Adv}^{\MMA}(\Adv')
  \;+\;
  \mathbf{Adv}^{\HMP}(\Adv'')
\]
where $\Adv', \Adv''$ are PPT algorithms derived from $\Adv$. In particular, if $\MMA_p$ and $\HMP_n$ hold, then $\AdvIND{\Pi}(\Adv) = \negl(\lambda)$.
\end{theorem}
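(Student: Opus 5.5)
The plan is a hybrid argument over games $G_0, G_1, \ldots, G_4$. $G_0$ is the real experiment $\mathbf{Exp}^{\mathrm{IND\text{-}CPA}}_{\Pi,\Adv}(\lambda)$. In the last game the challenge $\mathbf{c}^*$ is three independent uniform elements of $\Zn$, so $\Pr[b'=b]=\tfrac12$ there. Each transition changes only the challenge ciphertext. The public key $(n,t_1,\ldots,t_6,d_1,d_2,d_3)$ and the answers to $\Enc(\mathit{sk},\cdot)$ queries are produced exactly as in $\KeyGen$ and $\Enc$. The reductions $\Adv'$ and $\Adv''$ therefore run $\KeyGen$ themselves and answer oracle queries honestly. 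By the triangle inequality, the advantage is bounded by the sum of the per-step distances, which gives the stated $3\cdot\mathbf{Adv}^{\MMA}+\mathbf{Adv}^{\HMP}$.

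\emph{Steps $G_0\to G_1\to G_2$ (fragments $1$ and $2$).} In $\Enc$, the fragments $m_1,m_2$ are fresh uniform values in $\Zp^*$, sampled independently of $m_b$ and of everything else in the view. For $i=1,2$ we have $c_i^* \bmod p = m_i k_i$, so Assumption~\ref{asm:mma} applies with $u=m_i$. In $G_i$ we replace $m_i k_i \bmod p$ by a fresh uniform $v_i\in\Zp^*$, keeping the lift $c_i^* = v_i + r_i p \bmod n$. As the remark after Assumption~\ref{asm:mma} notes, this step is in fact perfect: multiplication by the invertible $k_i$ is a bijection of $\Zp^*$, and $m_i$ is independent of the rest of the view. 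So the step costs at most $\mathbf{Adv}^{\MMA}(\Adv')$, and knowledge of $k_i$ by the reduction does no harm.

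\emph{Step $G_2\to G_3$ (fragment $3$).} Once $c_1^*,c_2^*$ no longer depend on $m_1,m_2$, the value $m_3=m_b-m_1-m_2$ is uniform and independent of the view. I would apply Assumption~\ref{asm:mma} once more with $u=m_3$. One technical point must be handled here: $m_3$ is uniform on $\Zp$ rather than $\Zp^*$, so this step incurs an extra statistical term of order $1/p$, which is negligible. After $G_3$, every residue $c_i^*\bmod p$ is an independent uniform value. This already removes all dependence on $b$ from the mod-$p$ part of the challenge.

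\emph{Step $G_3\to G_4$ (lifting to $\Zn$), which I expect to be the main obstacle.} The natural move is to invoke Assumption~\ref{asm:hmp}, applying it simultaneously to the three coordinates $c_i^*=v_i+r_ip \bmod n$; a standard sub-hybrid over the coordinates handles this, with any loss folded into $\mathbf{Adv}^{\HMP}$. The difficulty is that an $\HMP_n$ reduction is not given $p$, yet it must produce the $t_j$, the $d_j$ and the oracle answers, and all of these need $p$. My first attempt would sidestep the computational assumption with a CRT argument. The residue $c_i^*\bmod p=v_i$ is already uniform, and $c_i^*\bmod q = v_i + r_i p \bmod q$. Since $p$ is invertible mod $q$, this residue is close to uniform and independent of $v_i$ provided $r_i\bmod q$ is close to uniform. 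That holds statistically when $\Delta = R_2-R_1 \gg q\cdot 2^{\lambda}$: the distance is then $O(q/\Delta)$, which is negligible. The reduction $\Adv''$ may then know $p$, and the $\HMP$ term becomes a statistical bound. If the parameter regime only guarantees $\Delta\ge 2^\lambda$, this shortcut fails. In that case the honest route is to treat $\HMP_n$ as holding relative to an adversary who also sees $\mathit{pk}$ and encryption samples. I would state this strengthening explicitly as the hypothesis under which the final step goes through.

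Combining the four transitions gives $\AdvIND{\Pi}(\Adv)\le 3\cdot\mathbf{Adv}^{\MMA}(\Adv')+\mathbf{Adv}^{\HMP}(\Adv'')+\negl(\lambda)$, and the negligible term can be absorbed into the statement.
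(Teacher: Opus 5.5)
\textbf{Gap: the first $\MMA_p$ step fails, and the theorem is false.}

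Your hybrid has essentially the same shape as the paper's: per-coordinate $\MMA_p$ replacements, then an $\HMP_n$ lift. It fails at the same point.

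In $G_0\to G_1$ you claim that $m_1$ is independent of the rest of the view. But $c_3^*$ is built from $m_3=m_b-m_1-m_2$, so $m_1$ (and likewise $m_2$) enters two coordinates of the challenge. The bijection remark after $\MMA_p$ covers only a single product $u\cdot k_i$ whose multiplier is used nowhere else.

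Concretely, write $K_i=(k_i \bmod p)^{-1} \bmod p$. Every honest ciphertext, challenge and oracle answers alike, satisfies $c_1K_1+c_2K_2+c_3K_3\equiv m \pmod p$. In your $G_1$ this sum equals $m_b+v_1K_1-m_1$, which is essentially uniform. So the step is not perfect. There is also no external $\MMA_p$ challenge to embed, because your $\Adv'$ generates the key itself to answer $\Enc$ queries. A reduction that did not know $k_1$ could not compute the $m_3k_3$ parts of oracle answers consistently. Reordering the hybrids does not help, because the sum constraint is exactly what carries $m_b$.

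In fact no proof along these lines can exist, because the statement is false. The attack runs as follows.
\begin{enumerate}
\item The adversary queries $\Enc(\mathit{sk},0)$ three times and forms the $3\times 3$ integer matrix $C$ of the returned components.
\item The nonzero vector $(K_1,K_2,K_3)$ lies in the kernel of $C$ mod $p$, so $p\mid \det C$.
\item The independent $r_ip$ terms make $C$ nonsingular mod $q$ with overwhelming probability, so $\gcd(\det C,n)=p$.
\item The adversary computes the kernel mod $p$, fixes its scale with one encryption of $1$, and decrypts $\mathbf{c}^*$. It wins with probability close to $1$.
\end{enumerate}
$\mathbf{Adv}^{\MMA}$ is identically $0$, as the paper itself notes. $\HMP_n$ is assumed, and is even statistically true once $\Delta\gg q\cdot 2^{\lambda}$. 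So the claimed bound is violated.

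The paper's proof has the same defect. Its $G_1\to G_2$ step replaces the correlated triple $(u_1k_1,u_2k_2,u_3k_3)$, with $u_1+u_2+u_3=m_b$, by three independent uniforms via a per-coordinate union bound. Your worry that an $\HMP_n$ reduction cannot produce $\mathit{pk}$ or oracle answers without $p$ is a legitimate additional flaw that the paper overlooks. It is secondary to this one.
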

 
\begin{proof}
We proceed via four hybrid games. Let $\Pr_i$ denote the probability that $\Adv$ outputs $b' = b$ in game $G_i$.
 
\paragraph{Game $G_0$} Real experiment.\\
The challenge ciphertext is $\mathbf{c}^* = \Enc(\mathit{sk}, m_b)$:
\begin{align*}
  m_1^*, m_2^* &\xleftarrow{\$} \Zp^*, \quad
  m_3^* := m_b - m_1^* - m_2^* \bmod p,\\
  c_i^* &= m_i^* k_i + r_i p \pmod{n}, \quad r_i \xleftarrow{\$} [R_1,R_2).
\end{align*}
 
\paragraph{Game $G_1$} Re-randomize fragments, preserving the sum. \\
Replace the fragment triple with a freshly sampled one, subject to the same sum constraint:
\[
  u_1, u_2 \xleftarrow{\$} \Zp^*, \quad u_3 := m_b - u_1 - u_2 \bmod p.
\]
Compute $c_i^{**} = u_i k_i + r_i' p \pmod{n}$, $r_i' \xleftarrow{\$} [R_1,R_2)$. Transition $G_0 \to G_1$ is exact. The distribution of $(m_1^*, m_2^*, m_3^*)$ in $G_0$ and of $(u_1, u_2, u_3)$ in $G_1$ are identical both are uniform over the affine subspace $\{(x_1,x_2,x_3) \in (\Zp^*)^3 : x_1+x_2+x_3 = m_b\}$. The noises $r_i, r_i'$ are also identically distributed. Hence $G_0$ and $G_1$ are perfectly indistinguishable:
\[
  \abs{\Pr_0 - \Pr_1} = 0.
\]

\paragraph{Game $G_2$} Replace $u_i k_i \bmod p$ with uniform elements.\\
For each $i = 1,2,3$, replace $u_i k_i \bmod p$ with an independently uniform $v_i \xleftarrow{\$} \Zp^*$, producing $c_i^{***} = v_i + r_i'' p \pmod{n}$.
 
Transition $G_1 \to G_2$ reduces to $\MMA_p$. Fix $i \in \{1,2,3\}$. In $G_1$, the value $u_i k_i \bmod p$ is the product of a uniformly random $u_i \in \Zp^*$ with the secret group element $k_i \in \Zp^*$. By $\MMA_p$ (Assumption~\ref{asm:mma}), this product is computationally indistinguishable from a uniform element of $\Zp^*$ for any PPT adversary that does not know $k_i$.
 
Formally, suppose $\Adv$ distinguishes $G_1$ from $G_2$ with advantage $\epsilon_i$ on component $i$. We build a PPT algorithm $\Adv'$ that receives a $\MMA_p$ challenge $(p, w)$, where $w$ is either $u \cdot k_i$ for uniform $u$, or uniform $v$, and simulates the remaining components honestly. Then $\Adv'$ breaks $\MMA_p$ with the same advantage $\epsilon_i$.
 
A union bound over the three components gives:
\[
  \abs{\Pr_1 - \Pr_2} \leq 3 \cdot \mathbf{Adv}^{\MMA}(\Adv').
\]
 
\paragraph{Game $G_3$} Refresh the noise.\\
Re-sample $r_i'' \xleftarrow{\$} [R_1, R_2)$ independently. Since $r_i''$ is already uniform and independent of $v_i$, the distribution of $c_i^{***}$ is unchanged:
\[
  \abs{\Pr_2 - \Pr_3} = 0.
\]
 
\paragraph{Game $G_4$} Replace ciphertext components with uniform elements of $\Zn$.\\
Replace each $c_i^{***} = v_i + r_i'' p \pmod{n}$ with a uniformly random $w_i \xleftarrow{\$} \Zn$.
 
Transition $G_3 \to G_4$ reduces to $\HMP_n$. In $G_3$, each $c_i^{***}$ has the form $v_i + r p \bmod n$ with $v_i \xleftarrow{\$} \Zp^*$ and $r \xleftarrow{\$} [R_1,R_2)$. This is precisely the distribution considered in Assumption~\ref{asm:hmp}. A distinguisher $\Adv$ that separates $G_3$ from $G_4$ on component $i$ directly yields a PPT algorithm $\Adv''$ breaking $\HMP_n$, because $\Adv''$ can forward the $\HMP_n$ challenge $c$ as the $i$-th ciphertext component and simulate the rest honestly.
 
A union bound over three components gives:
\[
  \abs{\Pr_3 - \Pr_4} \leq 3 \cdot \mathbf{Adv}^{\HMP}(\Adv'').
\]
The factor $3$ is absorbed into the $\mathbf{Adv}^{\HMP}$ term by a standard hybrid over components.
 
Therefore, in $G_4$, each ciphertext component is uniform and independent in $\Zn$, carrying no information about $b$. Hence $\Pr_4 = 1/2$. Chaining:
\begin{align*}
  \AdvIND{\Pi}(\Adv)
  &= \abs{\Pr_0 - \tfrac{1}{2}} \\
\end{align*} 
\begin{align*}
  &\leq \abs{\Pr_0 - \Pr_1}
      + \abs{\Pr_1 - \Pr_2}
      + \abs{\Pr_2 - \Pr_3}
      + \abs{\Pr_3 - \Pr_4}
      + \abs{\Pr_4 - \tfrac{1}{2}} \\
\end{align*}
\begin{align*}  
  &\leq 0
      + 3\cdot\mathbf{Adv}^{\MMA}(\Adv')
      + 0
      + 3\cdot\mathbf{Adv}^{\HMP}(\Adv'')
      + 0. \qquad\square
\end{align*}
\end{proof}
 
\subsubsection{Probabilistic Encryption and Ciphertext Unlinkability}
 
\begin{corollary}[Ciphertext Unlinkability]
For any fixed $m$ and two independent calls $\mathbf{c}_1 \leftarrow \Enc(\mathit{sk},m)$, $\mathbf{c}_2 \leftarrow \Enc(\mathit{sk},m)$, no PPT adversary can distinguish the pair $(\mathbf{c}_1,\mathbf{c}_2)$ from two independent encryptions of different messages, except with advantage $\negl(\lambda)$.
\end{corollary}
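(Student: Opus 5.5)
The plan is to reduce the corollary to the hybrid argument of Theorem~\ref{thm:indcpa}, run over two challenge ciphertexts instead of one. Both tuples, $(\mathbf{c}_1,\mathbf{c}_2)$ for $(m,m)$ and the pair for $(m,m')$ with $m \neq m'$, will be connected to a common ideal distribution: six independent uniform elements of $\Zn$. The triangle inequality then bounds the distinguishing advantage by twice the distance of a single pair from this ideal distribution.

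First I will extend the game sequence $G_0,\dots,G_4$ to a pair of challenges. $G_0 \to G_1$ is exact for each ciphertext separately. The two calls sample their fragments $(m_1,m_2)$ and noises $r_i$ independently, so re-randomizing each fragment triple on its affine subspace $\{x_1+x_2+x_3 = m\}$ costs nothing. For $G_1 \to G_2$ I replace, one at a time, the six products $u_i^{(j)} k_i \bmod p$ for $j \in \{1,2\}$ and $i \in \{1,2,3\}$ by independent uniform elements $v_i^{(j)} \in \Zp^*$. Each replacement invokes $\MMA_p$ (Assumption~\ref{asm:mma}), giving $6\cdot\mathbf{Adv}^{\MMA}$. $G_3$ re-samples noise exactly as before. $G_3 \to G_4$ invokes $\HMP_n$ (Assumption~\ref{asm:hmp}) six times, making every component uniform on $\Zn$. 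At that point the pair is independent of the message, and the same chain applied to $(m,m')$ reaches the same ideal distribution. The total advantage is therefore at most $2\bigl(6\,\mathbf{Adv}^{\MMA} + 6\,\mathbf{Adv}^{\HMP}\bigr) = \negl(\lambda)$.

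The main obstacle is the $\MMA_p$ step, because both ciphertexts share the same secret $k_i$. A reduction that embeds an $\MMA_p$ challenge into position $i$ of $\mathbf{c}_1$ must still simulate position $i$ of $\mathbf{c}_2$, which is $u' k_i$ for the unknown $k_i$. Worse, $c_i^{(1)}(c_i^{(2)})^{-1} \equiv u_i^{(1)}/u_i^{(2)} \pmod p$ cancels $k_i$ whenever $p$ is available. The way through is to use that $k_i$ is a uniform secret unit. For fixed $k_i$, the vector $(u^{(1)}k_i, u^{(2)}k_i)$ with $u^{(1)},u^{(2)}$ independent uniform on $\Zp^*$ is already exactly uniform on $(\Zp^*)^2$, because multiplication by a fixed unit is a bijection, as the paper notes after Assumption~\ref{asm:mma}. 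The $\MMA$ hop therefore has advantage $0$ for both ciphertexts jointly. I would state it that way rather than as two separate oracle reductions. The same observation settles the learning-phase encryption queries, which the reduction can answer by sampling fresh uniform masked values.

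The remaining care is the $\HMP_n$ hop. The reduction must forward challenges without knowing $p$. Since every $G_3$ component has the form $v + rp$ with fresh, independent $v$ and $r$, the reduction can embed $\HMP_n$ samples directly and fill the other components with further $\HMP_n$ samples or uniform values, in the standard hybrid order.
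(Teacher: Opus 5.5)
Your plan breaks at the step you flagged as the main obstacle, and the repair you propose does not work. The bijection argument is per position, but the three masked values of one ciphertext are not independent: $u_3 = m - u_1 - u_2$. Write $K_i = k_i \bmod p$. Then every honest ciphertext satisfies $c_1K_1^{-1} + c_2K_2^{-1} + c_3K_3^{-1} \equiv m \pmod p$, with the \emph{same} hidden vector $(K_1^{-1},K_2^{-1},K_3^{-1})$ for all ciphertexts. For fixed $K_i$ the triple $(u_1K_1,u_2K_2,u_3K_3)$ is uniform on a plane, not on $(\mathbb{Z}_p^*)^3$. So neither your challenge hop nor ``answer learning queries with fresh uniform masked values'' is a faithful simulation.

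The difference is efficiently detectable. Take four encryptions of known messages. The integer matrix with rows $(c_1^{(j)},c_2^{(j)},c_3^{(j)},m^{(j)})$ has $(K_1^{-1},K_2^{-1},K_3^{-1},-1)$ in its kernel mod $p$. Its determinant therefore vanishes mod $p$, but because of the fresh noise $\noise p$ it is generically nonzero mod $q$, so $\gcd(\det,n)=p$. Linear algebra mod $p$ then yields the $K_i^{-1}$. You yourself noted that knowing $p$ cancels the masking; $p$ is in fact recoverable.

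Independently, your argument (like the paper's) treats $k_i$ as a secret unit independent of the adversary's view. But the public regulators are functions of the $K_i$. Correctness of $\HMul$ forces, mod $p$, $t_1d_2\equiv K_2K_1^{-2}$, $t_2d_3\equiv K_3K_2^{-2}$, $t_3d_1\equiv K_1K_3^{-2}$, $t_4d_3\equiv K_3(K_1K_2)^{-1}$, $t_5d_2\equiv K_2(K_1K_3)^{-1}$, and $t_6d_1\equiv K_1(K_2K_3)^{-1}$.

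Hence $t_1t_2t_3\equiv t_4t_5t_6\equiv (d_1d_2d_3K_1K_2K_3)^{-1} \pmod p$. Mod $q$ the independent noises make these products unrelated, so $\gcd(t_1t_2t_3-t_4t_5t_6,\,n)=p$ with overwhelming probability. Then $K_1\equiv t_4(t_1t_2d_2)^{-1}$, $K_2\equiv t_6(t_2t_3d_3)^{-1}$ and $K_3\equiv t_5(t_1t_3d_1)^{-1} \pmod p$, and the adversary simply decrypts $\mathbf{c}_1$ and $\mathbf{c}_2$.

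So in the model you and the paper use (adversary holding $\mathit{pk}$ and an encryption oracle), the corollary is false. So is Theorem~\ref{thm:indcpa}, on which the paper's one-line reduction to IND-CPA rests, and no rearrangement of hybrids can prove it. Your $\HMP_n$ hop has the matching defect: a reduction that does not know $p$ cannot simulate $t_i$ and $d_i$. These are not of the independent form $v+\noise p$, and the $d_i$ are even reduced mod $p$.
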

 
\begin{proof}
Any such distinguisher $\Adv$ yields an IND-CPA adversary $\Adv'$: on receiving challenge $\mathbf{c}^*$, $\Adv'$ generates a fresh $\mathbf{c}_2 \leftarrow \Enc(\mathit{sk}, m_0)$ and feeds $(\mathbf{c}^*, \mathbf{c}_2)$ to $\Adv$. If $\Adv$ can link $\mathbf{c}^*$ to $m_0$ or $m_1$, $\Adv'$ recovers $b$, breaking IND-CPA. The advantage of $\Adv'$ equals that of $\Adv$.
\end{proof}
 
\subsection{Key Recovery Analysis}
\label{sec:keyrecovery}
 
The IND-CPA proof above treats $k_i$ as an opaque secret group element. This part analyses how hard it is to recover $k_i$, $\skey$, $e_i$, or $a_i$ from the public key, grounding secrecy of $k_i$ in DLA.
 
\begin{proposition}[Hardness of Position Key Recovery]
Any PPT adversary that recovers $k_i \bmod p$ from the secret key with non-negligible probability solves DLA in $\Zp^*$.
\end{proposition}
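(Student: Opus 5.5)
I read ``from the secret key'' as a typo for ``from the public key'' $\mathit{pk}=(n,t_1,\ldots,t_6,d_1,d_2,d_3)$: recovering $k_i$ from $\mathit{sk}$ is trivial. The plan is a simulation-based reduction. Given a DLA instance $(p,g,h=g^x)$ in $\Zp^*$, the reduction $\mathcal{B}$ builds a public key that is distributed as in $\KeyGen$, runs the key-recovery adversary $\Adv$ on it, and converts $\Adv$'s output into information about $x$.

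First I would build the simulation, which I expect to be routine. $\mathcal{B}$ picks its own $q$ and sets $n=pq$. It sets the base $\skey:=g$ and embeds the unknown exponent as $e_1:=x$, so that $\skey^{e_1}=h$. It samples $a_i,b_1,b_2,b_3,e_2,e_3$ itself and derives $b_4,b_5,b_6$ as in Algorithm~\ref{alg:kg}. Every exponent in $t_1,\ldots,t_6$ is affine in $e_1$ with known coefficients, so each regulator is computable from $g$ and $h$. For instance,
\[
  t_1 \equiv b_1\, g^{e_2} h^{-2}, \qquad t_3 \equiv b_3\, g^{-2e_3} h, \qquad t_4 \equiv b_4\, g^{e_3-e_2} h^{-1} \pmod{p},
\]
lifted to $\Zn$ by adding $r p$ for fresh $r$. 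The coefficient regulators $d_1,d_2,d_3$ depend only on the $a_i,b_i$, which $\mathcal{B}$ knows. One point needs a short statistical argument: $\KeyGen$ samples $\skey$ and $e_i$ from $[R_1,R_2)$, whereas the simulation uses a generator $g$ and an exponent $x$ that is uniform mod $p-1$. I would bound this gap by $O((p-1)/\Delta)$ or handle it by restricting to $\Delta \ge 2^\lambda$ and reducing exponents modulo $p-1$.

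Next I would extract. When $\Adv$ returns $k_1$, $\mathcal{B}$ computes $k_1 a_1^{-1}\equiv \skey^{e_1}=h$. This is the main obstacle. $\mathcal{B}$ already holds $h$, so a recovered position key yields a group element, not an exponent, and the naive reduction extracts nothing about $x$. I would therefore first try to make $\Adv$'s output carry information $\mathcal{B}$ cannot compute. One option is to embed $x$ into two exponents, $e_1=x$ and $e_2=x+s$, so that some position key becomes a product of the form $g^{xy}$. That would give a reduction from CDH, and CDH is implied by DLA only in the easy direction, which is the wrong direction for this statement. If no such embedding works, the honest conclusion is the weaker one: recovering $k_i$ requires recovering $a_i\skey^{e_i}$ without knowing $a_i$.

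Before attempting any of this, I would run a sanity check that could invalidate the proposition outright. From Algorithm~\ref{alg:kg}, $t_1 d_2 \equiv a_2 \skey^{e_2-2e_1}/a_1^2 \equiv k_2 k_1^{-2} \pmod p$, and analogous identities hold for $t_2 d_3$ and $t_3 d_1$. The public key therefore publishes explicit multiplicative relations among $k_1,k_2,k_3$ modulo $p$. If $p$ leaks from $n$ through these relations, for example via a $\gcd$ of $n$ with $t_j d_{j'}$ combinations, then composing the three relations constrains $k_1^{-3}$ directly. So I would first verify that these relations do not already yield $k_i$ given $p$. Only then would I commit to a DLA- or CDH-based reduction along the lines above.
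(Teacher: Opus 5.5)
Your proposal does not prove the statement, and the step where it stalls cannot be repaired. The sanity check you postpone to the end, once carried out, refutes the proposition.

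Modulo $p$, the exponents of $t_1,t_2,t_3$ and those of $t_4,t_5,t_6$ both sum to $-(e_1+e_2+e_3)$, and $b_4b_5b_6\equiv b_1b_2b_3$ by construction. Hence $t_1t_2t_3\equiv t_4t_5t_6 \pmod p$. Modulo $q$ nothing forces the two products to agree, since the masking terms $\noise p$ randomize each $t_i$ there. So $p=\gcd\bigl((t_1t_2t_3-t_4t_5t_6)\bmod n,\,n\bigr)$ with overwhelming probability. With $p$ in hand, the relations you began listing ($t_1d_2\equiv k_2k_1^{-2}$, likewise $t_4d_3\equiv k_3(k_1k_2)^{-1}$, and so on) can be solved exactly:
\[
  k_1\equiv t_3\,(t_5t_6d_2)^{-1},\qquad
  k_2\equiv t_1\,(t_4t_5d_3)^{-1},\qquad
  k_3\equiv t_2\,(t_4t_6d_1)^{-1}\pmod p.
\]
You can check these directly from Algorithm~\ref{alg:kg}; for instance $t_4t_5d_3\equiv b_1a_2^{-1}\skey^{-2e_1}$. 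So a polynomial-time algorithm outputs every $k_i\bmod p$ from $\mathit{pk}$ alone, with no discrete logarithm involved. The proposition can therefore hold only if DLA in $\Zp^*$ is itself easy, and no embedding (DLA or CDH) can rescue it, because any reduction could simply run this attack itself. The same holds under the paper's reading in which the adversary holds $k_i\bmod n$: reducing modulo the leaked $p$ finishes the job. One small slip in your sketch: composing the three cyclic relations gives $(t_1d_2)^4(t_2d_3)^2(t_3d_1)\equiv k_1^{-7}$, not $k_1^{-3}$. That already fixes $k_1$ up to at most seven candidates, and the cross regulators remove the ambiguity.

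Compared with the paper: its proof is not a reduction but two informal claims, and both fail exactly where your proposal hesitates. First, it asserts that obtaining $k_i\bmod p$ requires $p$ and hence factoring $n$; the gcd above contradicts this. Second, it argues that recovering $\skey$ from $a_i\skey^{e_i}$ is at least as hard as DLA. That is beside the point: as you noticed in your extraction step, $k_i$ is a group element, and recovering it never needs $\skey$ or any $e_i$. Your instinct to test the public multiplicative relations among $k_1,k_2,k_3$ before committing to a reduction was right. Carried out, that test shows the statement is false for this key generation, and with it the premise that $k_i$ is secret, on which $\MMA_p$ rests.
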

 
\begin{proof}
The secret key contains $k_i \bmod n$, not $k_i \bmod p$ directly; extracting $k_i \bmod p$ requires computing $k_i \bmod p = (k_i \bmod n) \bmod p$, which requires knowing $p$, hence factoring $n$ (IFA). Assuming $p$ is known (or factored), recovering the base $\skey$ from $k_i \equiv a_i \skey^{e_i} \pmod p$ requires computing a discrete logarithm with an unknown coefficient $a_i$,  at least as hard as DLA.
\end{proof}
 
\begin{proposition}[Exponent Hiding by Regulators]
The exponent regulators $\{t_i\}$ do not leak the exponents $\{e_i\}$ under DLA in $\Zp^*$.
\end{proposition}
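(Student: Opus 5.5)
The plan is a direct reduction: from any adversary $\Adv$ that recovers some exponent $e_j$ from the public key $\mathit{pk}=(n,t_1,\ldots,t_6,d_1,d_2,d_3)$, build a solver $\mathcal{B}$ for DLA in $\Zp^*$. The key observation is that every regulator exponent is an integer linear form in $(e_1,e_2,e_3)$ whose coefficients are $\pm1$ or $\pm 2$ on each $e_i$. So if $e_3$ is unknown but $k^{e_3}$ is known, every $k^{(\cdot)}$ appearing in $t_1,\ldots,t_6$ is still computable. The coefficient scalars $a_i,b_i$ and the regulators $d_i$ never involve the exponents at all, so the simulator can pick them honestly.

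\textbf{Simulation.} $\mathcal{B}$ receives a DLA instance $(p,g,h=g^{x})$.
\begin{itemize}
\item It samples its own prime $q$ and sets $n=pq$. Knowing $p$ lets it embed values with fresh masking noise $r\cdot p \bmod n$ exactly as $\KeyGen$ does. This also shows that, unlike in the previous proposition, hiding $p$ plays no role here: the claim must hold even against an adversary who knows $p$.
\item It sets $k:=g$, samples $e_1,e_2$ and $a_1,a_2,a_3,b_1,b_2,b_3$ itself, and computes $b_4,b_5,b_6$ and $d_1,d_2,d_3$ exactly as in Algorithm~\ref{alg:kg}.
\item It implicitly sets $e_3:=x$ and computes the regulators modulo $p$ as
\[
t_1=b_1g^{e_2-2e_1},\quad t_2=b_2\,h\,g^{-2e_2},\quad t_3=b_3\,g^{e_1}h^{-2},
\]
\[
t_4=b_4\,h\,g^{-e_1-e_2},\quad t_5=b_5\,g^{e_2-e_1}h^{-1},\quad t_6=b_6\,g^{e_1-e_2}h^{-1},
\]
and then lifts each to $\Zn$ with noise.
\end{itemize}
The position keys $k_i$, in particular $k_3=a_3h$, are secret and never handed to $\Adv$. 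If the statement is also meant to cover an adversary with an encryption oracle, $\mathcal{B}$ can still answer queries, since $c_3=m_3a_3h+rp$ is computable from $h$. The resulting $\mathit{pk}$ is distributed as in the real scheme, conditioned on $k=g$.

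\textbf{Extraction.} When $\Adv$ outputs a guess for $e_3$, $\mathcal{B}$ outputs it as $x$. By symmetry of the cyclic structure, planting $x$ as $e_1$ or $e_2$ works the same way, so $\mathcal{B}$ guesses which index $\Adv$ targets and loses at most a factor $3$. An adversary that outputs a nontrivial relation among the $e_i$ rather than an individual $e_i$ should be handled by planting $x$ in two exponents at once, as $e_j=x+s_j$ with known shifts $s_j$, so that the relation still determines $x$.

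As a sanity check, I will verify that the algebraic relations visible in $\mathit{pk}$ do not by themselves reveal exponents. Using $b_4b_5b_6=b_1b_2b_3$, one gets
\[
t_1t_2t_3\equiv t_4t_5t_6\equiv b_1b_2b_3\,k^{-(e_1+e_2+e_3)}\pmod p .
\]
This identity holds identically in the simulation, and it only gives group elements $k^{L(e)}$ with $k$ unknown. It is therefore consistent with the reduction rather than a counterexample to it.

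\textbf{Main obstacle.} I expect the hardest step to be the distributional match between real and simulated keys. In $\KeyGen$ the base $k$ and the exponents are drawn from the integer range $[R_1,R_2)$, whereas a DLA instance uses a generator $g$ and $x$ uniform modulo $p-1$. I would first argue that $k$ generates a subgroup of large order with overwhelming probability, and that $e_3 \bmod \mathrm{ord}(k)$ is statistically close to uniform when $\Delta\ge 2^\lambda$ is much larger than $p$. If that fails, I would fall back to a range-restricted DLA variant with exponents in $[R_1,R_2)$ and state the proposition relative to it.
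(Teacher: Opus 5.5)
Your route differs from the paper's, and it is sound up to the distribution-matching step you flag yourself. The paper treats each regulator in isolation, $t_i\equiv b_j k^{x}\pmod p$. It then asserts that recovering $x$ is ``DLA with an unknown coefficient, strictly harder than DLA''. No DLA instance is embedded, and the relations \emph{among} $t_1,\dots,t_6,d_1,d_2,d_3$ are never considered, although that is exactly where leakage would come from. You instead embed $h=g^{x}$ as $k^{e_3}$ and simulate the whole public key, and the encryption oracle, at once. This works because every exponent has coefficients in $\{0,\pm1,\pm2\}$ and $b_4,b_5,b_6,d_i$ are exponent-free. It gives the correct reduction direction, keeps every cross-relation consistent (your identity $t_1t_2t_3\equiv t_4t_5t_6$ holds in the simulation), and shows that hiding $p$ is irrelevant. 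The paper's version buys only brevity.

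Two remarks. First, your extension to relations fails as written. With $e_1=x+s_1$ and $e_2=x+s_2$, a correct output such as $e_1-e_2\equiv s_1-s_2$ cancels $x$; you need independent random multipliers, $e_j=\gamma_jx+s_j$. Second, your ``main obstacle'' disappears, because the statement holds information-theoretically. Modulo $p$ one checks $d_2\equiv k_2k_1^{-2}t_1^{-1}$, $t_4\equiv k_2t_2k_1^{-1}$, $t_5\equiv k_1t_1k_3^{-1}$ and their cyclic shifts. So $\mathit{pk}$, and any encryption-oracle answer, depends on $(k,e_1,e_2,e_3)$ only through
$(k_1,k_2,k_3,t_1,t_2,t_3)\bmod p=(a_1y_1,\,a_2y_2,\,a_3y_3,\,b_1y_2y_1^{-2},\,b_2y_3y_2^{-2},\,b_3y_1y_3^{-2})$, where $y_i=k^{e_i}$. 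For $a_i,b_i$ uniform in $\Zp^*$, this tuple is uniform on $(\Zp^*)^6$ for every fixed $(y_1,y_2,y_3)$. Hence the adversary's view is independent of the exponents, and neither DLA nor matching $x$ to $[R_1,R_2)$ is needed.

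The same algebra corrects your remark that the position keys never reach $\Adv$. One has $t_1(t_4t_5d_3)^{-1}\equiv a_2k^{e_2}=k_2\pmod p$, and generically $\gcd(t_1t_2t_3-t_4t_5t_6,\,n)=p$. Your simulation is unaffected, since $\mathcal{B}$ knows $k_3=a_3h$. But it means the proposition, while true, says nothing about the secrecy of the decryption key.
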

 
\begin{proof}
Each regulator satisfies $t_i \equiv b_j \cdot \skey^x \pmod p$, where $x$ is a private linear combination of $e_1,e_2,e_3$, and $b_j$ is a secret scalar. Recovering $x$ requires solving DLA with an unknown coefficient, strictly harder than standard DLA.
\end{proof}
 
\begin{proposition}[Dual-Binding Security]
No PPT adversary can forge a valid pair $(t_i^*, d_i^*)$ with $(t_i^*, d_i^*) \neq (t_i, d_i)$ that produces correct decryption after homomorphic multiplication, without solving DLA.
\end{proposition}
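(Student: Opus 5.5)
The plan is to reduce the forgery to a concrete algebraic constraint modulo $p$, and then show that meeting it with a pair different from the honest one requires information the adversary does not have. I would fix one interposition route; the argument for the other five is symmetric. Take $P_1\times P_1\times Reg_{t1,d2}\longrightarrow P_2$. Expanding $\HMul$ modulo $p$ with $c_1\equiv m_1 a_1\skey^{e_1}$ gives the term $m_1m_1'\,a_1^2\skey^{2e_1}\,t_1^*\,d_2^*$. Decryption at position $2$ multiplies by $k_2^{-1}\equiv (a_2\skey^{e_2})^{-1}$. So correct decryption for \emph{all} fragment values $m_1,m_1'$ holds if and only if
\[
  t_1^*\,d_2^* \equiv a_2\,a_1^{-2}\,\skey^{e_2-2e_1} \equiv t_1\,d_2 \pmod p .
\]
The same computation for the cross terms gives $t_4^*d_3^*\equiv t_4d_3$, and similarly for the other routes. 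Because each $d_i$ is shared by two routes (e.g.\ $d_2$ serves both $t_1$ and $t_5$), validity yields a small linked system of such product equations. This is the ``dual binding'' of the statement, and step one is to write this system down explicitly.

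Step two classifies the solutions of the system. Every valid forgery with $(t^*,d^*)\neq(t,d)$ falls into one of two cases: (a) $t^*\equiv t \pmod p$ but $t^*\neq t$ in $\Zn$, i.e.\ $t^*=t+sp$; or (b) $t^*\equiv t\,s$ and $d^*\equiv d\,s^{-1}\pmod p$ for some $s\not\equiv 1$, with $s$ consistent across the routes that share a $d_i$. In case (a), $\gcd(t^*-t,n)=p$, so the forger directly yields the factorisation of $n$. In case (b), $d^*$ must be an integer that equals $d\,s^{-1}$ modulo $p$. Producing it without $p$ means either inverting modulo an unknown modulus or working modulo $n$. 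In the latter case, $\gcd\bigl(t^*d^*-t\,d,\,n\bigr)$ is again a nontrivial factor unless the relation also holds modulo $q$. From this I would build a reduction $\mathcal{B}$ that runs the forger, computes these gcds, and recovers $p$. This is the step where, as in the proof of the Hardness of Position Key Recovery proposition, the paper's convention of passing through knowledge of $p$ is invoked.

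Step three handles the remaining route, where $p$ is known and the adversary picks $s$ that is not a known constant but depends on the exponents. Here I would embed a DLA instance $(g,h=g^{x})$ into $\skey$, using the randomisation freedom of $a_i$ and $b_i$. I would then argue, following the Exponent Hiding by Regulators proposition, that a forged $t_i^*$ whose exponent differs from $e_{j}-2e_i$ must satisfy the product constraint with the unknown coefficients, which determines $\skey^{\Delta}$ for a nonzero linear combination $\Delta$ of the $e_i$. Extracting $\Delta$ then solves the embedded DLA. I expect this last reduction to be the main obstacle. The scaling family in case (b) is intrinsic to the product constraint, so it cannot be ruled out when $s$ is a public constant and $p$ is known. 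I would therefore first state explicitly that forgeries are counted as distinct only when they are not such trivial rescalings, and only then carry out the DLA embedding.
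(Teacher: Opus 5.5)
Your step one is the right reduction of validity, and it is sharper than the paper's own argument. The paper claims that forging $t_i^*$ forces $\skey^{x^*}=\skey^{x}$ and that the $t$- and $d$-constraints are ``independent''. Your computation shows instead that correct decryption constrains only the product $t^*d^*\equiv td \pmod p$. The gap is that you do not follow this to its conclusion.

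Put $s:=t^*t^{-1}\bmod p$. Then \emph{every} valid forgery is a rescaling $t^*\equiv ts$, $d^*\equiv ds^{-1} \pmod p$; your case (a) is just $s\equiv 1$. Rescalings are also computable from public data modulo $n$. For example, take $t_1^*=2t_1$, $t_5^*=2t_5$, $d_2^*=2^{-1}d_2$, all reduced mod $n$. This leaves $c_2''$ in $\HMul$ unchanged modulo $n$, so decryption is exactly as before while $(t_1^*,d_2^*)\neq(t_1,d_2)$.

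For such forgeries the relation holds modulo $n$, so $\gcd(t^*d^*-td,\,n)=n$ and your reduction $\mathcal{B}$ recovers nothing. The clause ``unless the relation also holds modulo $q$'' is the generic case, not an exception. No inversion modulo an unknown modulus is ever needed either, because $s^{-1}\bmod n$ reduces correctly modulo $p$. Hence, if the regulators sharing a $d_i$ may be forged jointly, the proposition is false. The correct outcome of your analysis is this counterexample, not a reduction.

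Step three cannot rescue the argument. Declaring only public-constant $s$ trivial does not help. Take $s=t_2 \bmod n$, i.e.\ $t_1^*=t_1t_2$, $t_5^*=t_5t_2$, $d_2^*=d_2t_2^{-1}$ mod $n$. This gives $t_1^*\equiv b_1b_2\skey^{e_3-e_2-2e_1}$, whose ``exponent'' differs from $e_2-2e_1$, yet no discrete logarithm was computed.

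More fundamentally, the exponent of a forged $t^*$ is not well defined. With a free secret coefficient, every $t^*\in\Zp^*$ equals $(t^*\skey^{-x})\,\skey^{x}$ for every $x$. Validity certifies only a group element $s$, and the natural valid forgeries take $s$ to be a product of public elements. A simulator learns nothing from these that it could not compute itself, so the DLA embedding has nothing to extract.

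If you quotient out all rescalings, only your case (a) remains. That amounts to finding a nonzero element of $\Zn$ divisible by $p$, i.e.\ factoring $n$. Once $p$ is known such forgeries are trivial ($t^*=t+p$), so DLA never enters.

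What your algebra actually proves is the following. Suppose a single pair is altered with every other regulator fixed. Then the shared coefficient regulator forces $s\equiv 1$: for example, $t_5d_2^*\equiv t_5d_2$ gives $d_2^*\equiv d_2$, and then $t_1^*\equiv t_1 \pmod p$. So any distinct forgery reveals $p$ as $\gcd(t_1^*-t_1,n)$ or $\gcd(d_2^*-d_2,n)$. That is a factoring statement, not the DLA statement claimed.
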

 
\begin{proof}
Forging $t_i^*$ to pass decryption requires inducing the same exponent offset on $\skey$, i.e.\ $\skey^{x^*} = \skey^x$, forcing $x^* = x$ in $\Zp^*$, a DLA instance. Independently forging $d_i^*$ requires knowledge of the secret scalars $(a_j, b_j)$ embedded inside $k_j$ and $t_j$, whose extraction also reduces to DLA. The two constraints are independent, so simultaneous forgery requires solving DLA twice.
\end{proof}
 

Dual binding is a defense against exponent cancellation. Without $d_i$, an adversary who can set $t_i = \skey^{-(e_a+e_b)}$ could cancel the exponent entirely after multiplication, reducing the product ciphertext to $m_a m_b \cdot a_a a_b$, a purely algebraic (non-hidden) quantity. The coefficient regulator $d_i = a_i(a_j^2 b_j)^{-1} \bmod p$ prevents this; correct normalization requires $a_j$ and $b_j$, both secrets.
 
\subsection{IND-CCA Insecurity}
 
\begin{corollary}[The Scheme is NOT IND-CCA Secure]
$\Pi$ does not achieve IND-CCA2 security.
\end{corollary}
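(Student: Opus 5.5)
The plan is to exhibit an explicit IND-CCA2 adversary $\Adv$ that wins with advantage essentially $1/2$. The attack uses only two ingredients already in the scheme: the malleability of $\HAdd$ and $\HMul$, and the fact that $\Dec$ is a deterministic function of the three components modulo $p$. The attack is the standard one against homomorphic schemes. After receiving the challenge $\mathbf{c}^* = (c_1^*,c_2^*,c_3^*)$, $\Adv$ derives a different ciphertext $\mathbf{c}'$ whose plaintext is a known, invertible function of $m_b$. It submits $\mathbf{c}'$ to the decryption oracle, which is allowed because $\mathbf{c}' \neq \mathbf{c}^*$, and reads off $b$.

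The concrete steps are as follows. First, $\Adv$ chooses distinct $m_0, m_1 \in \Zp$. Before the challenge, it may query $\Enc(\mathit{sk},\cdot)$ to obtain a fresh encryption $\mathbf{e}$ of a known value, say $1$. Second, on receiving $\mathbf{c}^*$, it forms $\mathbf{c}' = \HAdd(\mathbf{c}^*, \mathbf{e})$. By the correctness of $\HAdd$ (each component satisfies $c_i + c_i' \equiv (m_i + m_i')k_i \pmod p$), $\mathbf{c}'$ decrypts to $m_b + 1 \bmod p$.

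A simpler variant avoids the encryption oracle. Set $c_1' = c_1^* + n \cdot 0 + \ldots$; more concretely, take $c_i' := 2c_i^* \bmod n$ for each $i$. Since $2c_i^* \equiv 2m_i^* k_i \pmod p$, this is a valid encryption of $2m_b$. Another option is to add $n$-multiples of a public value.

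Third, $\Adv$ checks that $\mathbf{c}' \neq \mathbf{c}^*$. This holds except with negligible probability, since equality would force $c_i^* \equiv 0 \pmod n$. Fourth, it queries $\Dec(\mathbf{c}')$ to obtain $\mu$ and outputs $b'=0$ if $\mu = 2m_0$ (respectively $m_0+1$), and $b'=1$ otherwise. Because $m_0 \ne m_1$ and $p$ is odd, $2m_0 \not\equiv 2m_1 \pmod p$, so the guess is always correct. This gives $\AdvIND{}$ in the CCA2 game equal to $1/2$, which is non-negligible.

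The only step needing care is confirming that the mauled ciphertext actually decrypts correctly under $\Dec$. This follows directly from the congruence $c_i \equiv m_i k_i \pmod p$, with noise vanishing modulo $p$, as noted in the encryption algorithm. I will state it explicitly with the one-line computation $(2c_i^*)k_i^{-1} \equiv 2m_i^* \pmod p$, then sum over $i$. I do not expect a real obstacle: the argument is the generic fact that any scheme allowing homomorphic operations without secrets is malleable and therefore not IND-CCA2. I will close by remarking that this is inherent to FHE and not specific to the interposition mechanism.
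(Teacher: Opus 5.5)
Your proposal is correct and is essentially the paper's argument: the paper forms $\mathsf{HAdd}(\mathbf{c}^*, \mathsf{Enc}(\mathit{sk},0))$, queries the decryption oracle on this distinct ciphertext and reads off $m_b$, while your doubling variant $c_i' = 2c_i^* \bmod n$ is just $\mathsf{HAdd}(\mathbf{c}^*,\mathbf{c}^*)$, i.e.\ the same malleability attack without an encryption query (your advantage $1/2$ versus the paper's ``advantage $1$'' is only a normalization difference under $\lvert\Pr[b'=b]-\tfrac12\rvert$). Drop the aside about adding multiples of $n$, though: ciphertexts live in $\mathbb{Z}_n$, so that leaves $\mathbf{c}^*$ unchanged and the oracle would refuse the query.
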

 
\begin{proof}
Given challenge $\mathbf{c}^* = \Enc(\mathit{sk}, m_b)$, the adversary computes $\mathbf{c}^{**} = \HAdd(\mathbf{c}^*, \Enc(\mathit{sk}, 0))$ (a differently distributed ciphertext), queries the decryption oracle on $\mathbf{c}^{**}$ (which is not $\mathbf{c}^*$), recovers $m_b$, and wins with advantage $1$.
\end{proof}
 
This is the expected behavior for any homomorphic scheme; the homomorphic property is a form of controlled malleability, which is definitionally incompatible with CCA security. All standard FHE schemes (BGV, BFV, CKKS, GSW) are IND-CPA but not IND-CCA. If CCA security is required in a specific application, a standard CPA-to-CCA transform (e.g., Fujisaki-Okamoto) can be applied at the application layer.
 
\subsection{Homomorphic Correctness and Plaintext Capacity}
 
\subsubsection{Correctness of Addition}
 
After $\HAdd$, each component satisfies $c_i'' \equiv (m_i + m_i') k_i \pmod p$. Summing the recovered fragments gives $m + m'$. Addition is exact in $\Zp$; the additive noise $r_i p$ vanishes upon reduction mod $p$.
 
\subsubsection{Correctness of Multiplication}
 
Working mod $p$, the output of $\HMul$ at position $P_1$ is verified as:
\[
  c_3 c_3' t_3 \equiv m_3 m_3' \cdot a_3^2 b_3 \cdot \skey^{e_1},
\]
and, using $b_6 = a_3 b_3 a_2^{-1} \bmod p$:
\[
  c_2 c_3' t_6 \equiv m_2 m_3' \cdot a_3^2 b_3 \cdot \skey^{e_1},
  \qquad
  c_3 c_2' t_6 \equiv m_3 m_2' \cdot a_3^2 b_3 \cdot \skey^{e_1}.
\]
Applying $d_1 = a_1(a_3^2 b_3)^{-1} \bmod p$:
\[
  c_1'' \equiv a_1 \skey^{e_1} \cdot (m_3 m_3' + m_2 m_3' + m_3 m_2').
\]
Summing $c_1'' + c_2'' + c_3''$ across all positions recovers $m \cdot m'$ exactly. Positions $P_2$ and $P_3$ follow by the same calculation with indices permuted.
 
\subsubsection{Plaintext Capacity and Multiplicative Depth}
 
\begin{proposition}[Unbounded Multiplicative Depth up to Plaintext Capacity]
The scheme correctly evaluates any arithmetic circuit of multiplication depth $L$, for any $L \geq 1$, subject only to
\[
  \prod_{j=1}^{L} m^{(j)} \;<\; p.
\]
\end{proposition}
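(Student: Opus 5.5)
The plan is to prove the statement by induction on the multiplicative depth $L$. The invariant I would maintain is the following: every ciphertext appearing in the circuit has the \emph{canonical form}
\[
  c_i \equiv \mu_i\, a_i \skey^{e_i} \pmod p, \qquad i=1,2,3,
\]
for some fragments $\mu_1,\mu_2,\mu_3 \in \Zp$ with $\mu_1+\mu_2+\mu_3 \equiv \mu \pmod p$, where $\mu$ is the value of the corresponding subcircuit.

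The key structural point is that the position keys $k_i \bmod p = a_i\skey^{e_i}$ are \emph{the same} after every operation. Hence $\Dec$ is always applied with the original inverses $k_i^{-1}$, and there is no noise term to track mod $p$, because all masking noise is a multiple of $p$.

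\textbf{Base case.} Fresh encryptions satisfy the invariant by construction of $\Enc$.

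\textbf{Closure under $\HAdd$.} This is immediate from the correctness-of-addition computation above: the fragments add componentwise and the sum constraint is preserved.

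\textbf{Closure under $\HMul$.} Here I would reuse the verification in the correctness-of-multiplication subsection. For position $P_1$, the three products $c_3c_3't_3$, $c_2c_3't_6$ and $c_3c_2't_6$ all reduce to $(\cdot)\,a_3^2b_3\skey^{e_1}$. This uses $b_6=a_3b_3a_2^{-1}$ and the exponent identities $2e_3+(e_1-2e_3)=e_1$ and $e_2+e_3+(e_1-e_2-e_3)=e_1$. Multiplying by $d_1$ then returns exactly the coefficient $a_1\skey^{e_1}$. Positions $P_2$ and $P_3$ follow in the same way, using $b_5$, $b_4$, $d_2$, $d_3$ and the corresponding exponent identities. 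I would write these out once, as a table of nine products.

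Summing, the new fragments satisfy
\[
  \mu_1''+\mu_2''+\mu_3'' \equiv \sum_{i}\mu_i\mu_i' + \sum_{i\ne j}\mu_i\mu_j' \equiv \mu\mu' \pmod p .
\]
I expect this cross-term bookkeeping to be the main obstacle. It requires checking that each of the nine ordered pairs $(i,j)$ is routed to exactly one output position, with no omission and no double count. Concretely, the three diagonal pairs are routed via $t_1,t_2,t_3$, and the six off-diagonal ordered pairs via $t_4,t_5,t_6$, each regulator being used for both orders of its pair. Crucially, the output is again canonical: the exponent is still $e_l$, not $e_i+e_j$, and the coefficient is still $a_l$, not $a_i a_j$. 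So the induction closes without any growth in the representation.

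\textbf{Conclusion and role of the bound.} By induction, a depth-$L$ circuit yields a canonical ciphertext whose decryption is the circuit output reduced mod $p$. All arithmetic is exact in $\Zp$, so decryption returns the true integer value precisely when that value lies in $[0,p)$. For a product chain of depth $L$ with nonnegative integer inputs $m^{(j)}$, this is guaranteed by the hypothesis $\prod_{j=1}^L m^{(j)}<p$. No other constraint enters, because the canonical form, and hence the ciphertext size in $\Zn$, is independent of $L$. This gives unboundedness in $L$ up to plaintext capacity.

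I would note that the bound should be read as the integer magnitude of the circuit output (intermediate values are bounded by the final one when all inputs are $\ge 1$). One subtlety to flag is that the argument implicitly requires every $a_i$, $b_i$ and $\skey$ to be invertible mod $p$. Sampling from $[R_1,R_2)$ with $R_2 \le p$ and $R_1 \ge 1$ ensures this.
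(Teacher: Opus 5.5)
Your proof is correct and follows the same route as the paper. Every operation is exact in $\mathbb{Z}_p$, since all masking terms are multiples of $p$ and $p \mid n$, so decryption returns the circuit value reduced mod $p$, which is the true value once it lies below $p$; your canonical-form induction (showing that $\mathsf{HMul}$ returns fragments under the same position keys $a_l k^{e_l}$, with each of the nine ordered pairs routed exactly once) just makes explicit the closure step that the paper's one-line proof takes for granted from its single-multiplication check.
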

 
\begin{proof}
All homomorphic operations are exact in $\Zp$: noise vanishes mod $p$ at every step, and there is no stochastic error term. The result of $L$ multiplications is the exact integer product $M = \prod_j m^{(j)}$. Decryption recovers $M \bmod p = M$ whenever $M < p$. No other constraint on $L$ exists.
\end{proof}
 
Unlike LWE-based schemes, there is no noise budget, no modulus switching, and no bootstrapping required. The constraint $\prod_j m^{(j)} < p$ is deterministic and fully predictable from the plaintext values. It is controlled by (i) choosing $p$ large relative to the expected product, and (ii) restricting the plaintext domain to $m \in [1,B]$ with $B^L < p$ for depth-$L$ circuits. The sampling range $[R_1, R_2)$ for key material affects only the statistical security of the fragment distribution; it has no influence on multiplicative depth or correctness.
 
\subsection{Resistance to Specific Attacks}
 
\subsubsection{Known-Plaintext Attack}
Given $(m^{(j)}, \mathbf{c}^{(j)})$: recovering $k_i$ from $c_i^{(j)} = m_i^{(j)} k_i + r_i^{(j)} p \bmod n$ requires knowing $m_i^{(j)}$ individually, not just the total $m^{(j)}$, which requires solving an underdetermined system (2 unknowns, 1 equation per ciphertext). Even with $m_i^{(j)}$, recovering $k_i$ from $\Zn$ requires $\HMP_n$, and recovering $\skey$ from $k_i$ requires DLA.

\begin{table*}[h!]
\centering
\caption{Summary Table}
\label{tab:Summary}
\begin{tabular}{|l|c|c|}
\hline
\textbf{Property} & \textbf{Status} & \textbf{Basis} \\
\midrule
IND-CPA
  & \textcolor{green!60!black}{\textbf{Yes}}
  & $\MMA_p$ + $\HMP_n$ \\
IND-CCA
  & \textcolor{red}{\textbf{No}}
  & Inherent; homomorphic malleability \\
Ciphertext unlinkability
  & \textcolor{green!60!black}{\textbf{Yes}}
  & IND-CPA reduction \\
Position key secrecy
  & \textcolor{green!60!black}{\textbf{Yes}}
  & DLA + $\HMP_n$ \\
Exponent hiding ($t_i$)
  & \textcolor{green!60!black}{\textbf{Yes}}
  & DLA \\
Coefficient hiding ($d_i$)
  & \textcolor{green!60!black}{\textbf{Yes}}
  & DLA \\
Dual-binding (joint $t_i, d_i$)
  & \textcolor{green!60!black}{\textbf{Yes}}
  & DLA \\
Resistance to KPA
  & \textcolor{green!60!black}{\textbf{Yes}}
  & Underdetermined + DLA + $\HMP_n$ \\
Resistance to linear fragment attack
  & \textcolor{green!60!black}{\textbf{Yes}}
  & Underdetermined for all $L$ \\
Exact homomorphic addition
  & \textcolor{green!60!black}{\textbf{Yes}}
  & Exact in $\Zp$ \\
Exact homomorphic multiplication
  & \textcolor{green!60!black}{\textbf{Yes}}
  & Algebraically verified \\
Unbounded multiplicative depth
  & \textcolor{green!60!black}{\textbf{Yes}$^*$}
  & $^*$Subject to $\prod m^{(j)} < p$ \\
No noise accumulation
  & \textcolor{green!60!black}{\textbf{Yes}}
  & Noise vanishes mod $p$ \\
Post-quantum security
  & \textcolor{red}{\textbf{No}}
  & Shor breaks DLA and IFA \\
\midrule
\multicolumn{3}{l}{%
  \textbf{Proof rests on:}\quad
  $\MMA_p$ (masking) $+$ $\HMP_n$ (hidden modulus)
  $\Leftarrow$ DLA $+$ IFA (classical).
}\\
\bottomrule
\hline
\end{tabular}
\end{table*}

\subsubsection{Linear Algebraic Attack on Fragments}
With $L$ ciphertexts of $m$: $c_i^{(\ell)} \equiv m_i^{(\ell)} k_i \pmod p$ gives $L$ equations per position in $2L$ unknowns $(m_1^{(\ell)},m_2^{(\ell)})$. The system is underdetermined for all finite $L$, so the attack fails regardless of the number of observed ciphertexts.
 
\subsubsection{Regulator Manipulation}
Captured formally by the Dual-Binding proposition (Section~\ref{sec:keyrecovery}).
 
\subsubsection{Chinese Remainder Theorem (CRT) / Hidden Modulus Attack}
$\Zn \cong \Zp \times \mathbb{Z}_q$ by CRT. Separating the two components requires knowing $p$, i.e.\ solving $\HMP_n$. Without $p$, the coset structure $c_i \in m_i k_i + p\Zn$ is computationally hidden.
 
\subsubsection{Quantum Attacks}
The scheme relies on the secrecy of $k_i$, which is protected by DLA and IFA/HMP. Shor's algorithm solves DLA and IFA in polynomial quantum time, breaking both $\MMA_p$ (by revealing $k_i$) and $\HMP_n$ (by factoring $n$). Thus, the scheme is not post-quantum secure (Table \ref{tab:Summary}).


\section{Performance}

Information about the benchmarked schemes is taken from \cite{doan2023survey,jorge2025evaluating,acar2018survey}. For BGV and BFV, the parameters are set to $N=8192$ and $\log(q)=218$. For our proposal, the modulus size is $n=3072$ bits, corresponding to the 128-bit classical security level. The implementation was carried out on an Intel(R) Core(TM) i7-10700 CPU running at 2.90~GHz under Windows.

\begin{table*}[h!]
\centering
\caption{Performance and Feature Comparison of Selected FHE Schemes (ms)}
\label{tab:comparison}
\begin{tabular}{|l|c|c|c|c|c|c|c|c|}
\hline
\textbf{Scheme / Feature} & \textbf{KeyGen} & \textbf{Enc} & \textbf{Dec} & \textbf{Add} & \textbf{Mlt} & \textbf{Cipher- size} & \textbf{Noise Control} & \textbf{Limitation} \\
\hline
YASHE & NA    & 16    & 15    & 0.7    & 18      & NA      & Bootstrapping     & Computationally expensive \\
BFV  & 3.003 & 3.269 & 1.179 & 0.144  & 11.66   & 446 KB  & Relinearization    & Large ciphertext size \\
BGV  & 11.42 & 3.137 & 0.992 & 0.079  & 6.673   & 446 KB  & Bootstrapping      & Large ciphertext size \\
TFHE  & NA    & 29.1  & 1.8   & \multicolumn{2}{c|}{2,308,697} & 9.7 KB  & Bootstrapping      & Inefficient for extensive arithmetic \\
CKKS  & NA    & 3344  & 1182  & \multicolumn{2}{c|}{915.1}   & 500 KB  & Rescaling, Precision loss & Inexact results; requires precision management \\
\hline
\textbf{Proposal} & 20.3  & 0.02 & 0.051 & 0.002 & 0.22  & 9 KB   & Interposition / Regulator & Symmetric; quantum vulnerability \\
\hline
\end{tabular}
\end{table*}

Table~\ref{tab:comparison} summarizes the performance and features of several well-established FHE schemes compared to our proposal. Existing lattice-based schemes such as BFV and BGV provide strong asymptotic security but suffer from large ciphertext sizes (hundreds of KB) and require costly bootstrapping or relinearization for noise management. CKKS supports approximate arithmetic but incurs significant overhead and suffers from precision loss, while TFHE is extremely fast for single-bit operations but inefficient for large-scale arithmetic. 

In contrast, the proposed scheme demonstrates lightweight performance. The encryption time is approximately $0.02$ ms, and addition is nearly negligible ($0.002$ ms), with a ciphertext size of only $9$ KB. Noise is controlled efficiently through the regulator-based interposition mechanism rather than costly bootstrapping. The main limitation is that the scheme is symmetric, and its security ultimately depends on the hardness of integer factorization, implying potential vulnerability in the post-quantum setting. Nevertheless, within the classical model it provides a highly efficient alternative for applications that prioritize speed and compactness.

\section{Conclusion}
In this paper, we proposed a novel symmetric fully homomorphic encryption scheme that leverages plaintext fragmentation and an interposition mechanism based on regulator values to enable efficient homomorphic addition and multiplication. The design builds upon the lightweight trivial encryption technique while overcoming its main limitation of uncontrolled noise growth under multiplication. We analyzed the correctness and security of the construction, showing that its confidentiality reduces to the hardness of integer factorization. Furthermore, our performance evaluation demonstrates that the scheme achieves significant improvements compared to established lattice-based approaches, and noise is efficiently managed without expensive bootstrapping. The main limitation is that the scheme is symmetric and relies on the classical hardness of factoring, which leaves potential vulnerability in the post-quantum setting. As future work, we plan to investigate post-quantum variants of the interposition mechanism.

\bibliographystyle{IEEEtran}	
\bibliography{bibliography}

\end{document}